\numberwithin{equation}{section}
\newtheorem{theorem}{Theorem}[section]
\newtheorem{lemma}[theorem]{Lemma}
\newtheorem{remark}[theorem]{Remark}
\DeclareMathOperator{\supp}{supp}
\DeclareMathOperator{\tr}{tr}
\newcommand\R{\mathbb R}
\newcommand\N{\mathbb N}
\newcommand\Z{\mathbb Z}
\newcommand\e{\mathrm{e}}
\renewcommand\P{\mathbb P}
\newcommand\E{\mathbb E}
\newcommand\eps{\varepsilon}
\newcommand{\pr}{\prime}
\newcommand{\bom}{{\boldsymbol{{\omega}}}}
\newcommand\beq{\begin{equation}}
\newcommand\eeq{\end{equation}}
\newcommand{\abs}[1]{\left\lvert #1 \right\rvert}
\newcommand{\norm}[1]{\left\lVert #1 \right\rVert}
\newcommand{\scal}[1]{\left\langle #1 \right\rangle}
\newcommand{\set}[1]{\left\{ #1 \right\}}
\newcommand{\pa}[1]{\left( #1 \right)}
\newcommand{\eq}[1]{\eqref{#1}}
\newcommand{\up}[1]{^{(#1)}}
\begin{document}

\title[Lifshitz tails for the DOS]{Lifshitz tails estimate for the density of states of the Anderson model}

\author[J.-M. Combes]{Jean-Michel Combes}
\address[Combes]{CPT, CNRS, Luminy Case 907, Cedex 9, F-13288 Marseille, France}
\email{combes@cpt.univ-mrs.fr}

\author[F. Germinet]{Fran\c cois Germinet}
\address[Germinet]{Universit\'e de Cergy-Pontoise,
CNRS UMR 8088, IUF, D\'epartement de Math\'ematiques,
F-95000 Cergy-Pontoise, France}
\email{germinet@math.u-cergy.fr}

\author[A. Klein]{Abel Klein}
\address[Klein]{University of California, Irvine,
Department of Mathematics,
Irvine, CA 92697-3875,  USA}
\email{aklein@uci.edu}

\thanks{2000 \emph{Mathematics Subject Classification.}
Primary 82B44; Secondary  47B80, 60H25}

\thanks{JM.C. and F.G. were supported by ANR BLAN 0261. A.K was  supported in part by NSF Grant DMS-1001509.}


\begin{abstract}
We prove an upper bound for the (differentiated) density of states of the Anderson model  at the bottom of the spectrum. The density of states is shown to exhibit the same Lifshitz tails upper bound as the integrated density of states.
\end{abstract}

\maketitle



\section{Introduction and main result}

We consider the Anderson model, the simplest random Schr\"odinger operator, given by the random Hamiltonian 
\beq\label{AndH}
H_{\bom}: = -\Delta+
V_{\bom} \quad \text{on}\quad \ell^2(\mathbb{Z}^d),
\eeq
$\Delta$ is the $d$-dimensional discrete Laplacian operator  and  $V_{\bom}$ is the random potential given by $V_{\bom}(j)=\omega_{j}$ for $j \in \Z^{d}$, where  $\bom=\{ \omega_j \}_{j\in
\Z^d}$  is a family of independent 
identically distributed random
variables  whose  common probability 
distribution $\mu$ is non-degenerate, with support bounded from below, and    has a bounded density $\rho$.  (The requirement $ \inf \supp \mu > -\infty$ is equivalent to requiring $H_{\bom} $ to be bounded from below with probability one.) Note that
\beq
V_{\bom} = 
\sum_{j \in \Z^d} \omega_j \Pi_j ,     \label{AndV}
\eeq
where $ \Pi_j$ is the orthogonal projection onto $\delta_{j}$, the delta function at site $j$.

The integrated density of states (IDS) of this celebrated model is known to exhibit Lifshitz tails behavior as the bottom of its spectrum (e.g.,  \cite{CL,Ki,PF}), a property that can be interpreted as a first signature of localization.  

In the physics literature there is much interest on the density of the IDS, the density of states (DOS).  There  is an implicit assumption that the IDS $N(E)$ is absolutely continuous, and hence has a density $n(E)$ given by its almost everywhere derivative.  Very few mathematical results are available for the DOS of   random   Schr\"odinger 
operators.  For nice enough models, like the  Anderson model as above, the  existence of the DOS  is a consequence of the celebrated Wegner estimate (see \eqref{hypWegner} below), which also shows the  DOS to be bounded. We note that while the absolute continuity of the IDS has been  known for a long time for the Anderson model \cite{W,FS}, this is a more recent result for models in the continuum \cite{CH,CHK2}. In dimension $d=1$ this DOS is known to be regular \cite{ST,CK,VSW}.  For arbitrary dimension $d$,    the DOS of the Anderson model described above is known to be regular at high disorder, a result obtained using supersymmetric methods by \cite{BCKP}. Regularity of the DOS is an open question  for models in the continuum.

In this article, we show that the DOS exhibits the same Lifshitz tails  upper bound as the IDS. To our best knowledge, this is the first time that such a bound is proved for the DOS of a random Schr\"odinger operator.

There is another reason to study  the DOS $n(E)$.   As proved by Minami \cite{Mi}, in the localization region  the properly rescaled eigenvalues of a discrete  Anderson Hamiltonian are  distributed as a Poisson point process with intensity $n(E)$.  Thus our results (see \eq{nelowerbd}) estimate the intensity of these Poisson point processes.

An  Anderson Hamiltonian $H_{\bom}$  is a $\Z^d$-ergodic family of
random self-adjoint operators.
It follows from standard results (cf. \cite{KiM})
that there exists fixed subsets $\Sigma$,  $\Sigma_{\mathrm{pp}}$, $\Sigma_{\mathrm{ac}}$  and $\Sigma_{\mathrm{sc}}$ of $\R$ so that the spectrum $\sigma(H_{\bom})$
of $H_{\bom}$,  as well as its pure point, 
absolutely continuous, and singular continuous  components,
are equal to these fixed sets with probability one. This non-random spectrum is given by 
\beq
\Sigma = \sigma(-\Delta)+ \supp \mu=[0,4d] +\supp \mu.
\eeq
Note $\inf \Sigma=\inf \supp \mu > -\infty$. 

If we set   $\omega_{j}^{\pr}= \omega_{j}-\inf \supp \mu$,  the new common probability distribution $\mu^{\pr}$ satisfies   $\inf \supp \mu^{\pr}=0$, and  we have  $H_{\bom}= H_{\bom^{\pr}} + \inf \supp \mu $.  Thus, without loss of generality we   assume from now on that
\beq
\inf\Sigma = \inf \supp \mu =0.
\eeq

Lifshitz tails and localization are known to hold at the bottom of the spectrum \cite{FMSS,vDK,Ki}. If the support of $\mu$ is also bounded from above, then so is $\Sigma$, and Lifshitz tails and localization also hold at the top, i.e.,  upper edge, of the spectrum.  In this case the results of this paper  also hold at the top of the spectrum.

The integrated density of states (IDS), $N(E)$,  can be written as (e.g., \cite{Ki})
\beq\label{defN(E)}
N(E) =\E \set{\tr \pa{\Pi_0 \chi_{]-\infty,E]}(H_\omega) \Pi_0}}=\E \set{\tr \pa{\Pi_0 \chi_{[0,E]}(H_\omega) \Pi_0}}.
\eeq 

The  Anderson model  is known to satisfy the following Lifshitz tails estimate, which asserts that the IDS has an exponential fall off as one approaches the edges of $\Sigma$. At the bottom of the spectrum, i.e., at energy $E=0$,  the IDS  satisfies (e.g., \cite{Ki})
\beq\label{LT}
\lim_{E \downarrow 0}     \frac{\log \abs{\log    N(E)}}{\log E}   
\le - \frac d2 .
\eeq
Equality is actually known to hold in \eqref{LT}.  Since $N(E)$ is an increasing function, it has a derivative $n(E):= N^{\pr}(E)$  almost everywhere, which is the density of states.

We prove 
\begin{theorem}\label{thmLT}
Let $H_\omega$ be an Anderson Hamiltonian.  Then there exists a Borel set $\mathcal{N}\subset [0,1]$ of zero Lebesgue measure, such that
\beq\label{LTdos}
\lim_{E \downarrow 0; \ E \notin \mathcal{N}}     \frac{\log \abs{\log    n(E)}}{\log E}   
\le - \frac d2 .
\eeq
\end{theorem}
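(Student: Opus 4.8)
The plan is to bound the density $n(E)$ \emph{pointwise} for a.e.\ $E$, rather than merely on average, since the Lifshitz bound \eqref{LT} on the increasing function $N$ is by itself not enough: a monotone, absolutely continuous $N$ satisfying \eqref{LT} may still have a derivative of order $\norm{\rho}_{\infty}$ on a set of positive Lebesgue measure accumulating at $0$, on which the quantity in \eqref{LTdos} tends to $0>-\tfrac d2$. Such a positive-measure ``bad set'' could never be absorbed into a Lebesgue-null $\mathcal{N}$, so the decay of $N$ must be converted into genuine pointwise smallness of $n$ using the structure of the Anderson model. Accordingly I would first reduce Theorem~\ref{thmLT} to the assertion: for every $\gamma<\tfrac d2$ there exist $E_0(\gamma)>0$ and a Borel-null set $\mathcal{N}_\gamma$ with $n(E)\le\e^{-E^{-\gamma}}$ for all $E\in(0,E_0(\gamma))\setminus\mathcal{N}_\gamma$. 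Granting this, set $\mathcal{N}=\bigcup_m\mathcal{N}_{\gamma_m}$ for a sequence $\gamma_m\uparrow\tfrac d2$, together with the null set where $N$ is not differentiable; then $\mathcal{N}$ is null, and for small $E\notin\mathcal{N}$ the bound $n(E)\le\e^{-E^{-\gamma_m}}$ forces $\frac{\log\abs{\log n(E)}}{\log E}\le-\gamma_m$, so that the limit in \eqref{LTdos} is $\le-\gamma_m$; letting $m\to\infty$ yields the claim.

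To obtain the pointwise bound I would exploit the rank-one structure at the origin. Writing $H_\omega=A_{\omega^{\perp}}+\omega_0\Pi_0$, where $A_{\omega^{\perp}}$ is the Hamiltonian with the site energy at $0$ removed and $\omega^{\perp}=(\omega_j)_{j\neq0}$, I would condition on $\omega^{\perp}$ and average only over $\omega_0$. The single-site spectral averaging principle underlying the Wegner estimate shows that the $\omega_0$-averaged spectral measure of $\delta_0$ is absolutely continuous with density bounded by $\norm{\rho}_{\infty}$, and yields a representation $n(E)=\E_{\omega^{\perp}}\big[g_{\omega^{\perp}}(E)\big]$ for a.e.\ $E$, where $0\le g_{\omega^{\perp}}(E)\le\norm{\rho}_{\infty}$ and, crucially, $g_{\omega^{\perp}}(E)=0$ unless $E$ lies in the part of $\sigma(A_{\omega^{\perp}})$ seen from $\delta_0$. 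This is the step that upgrades ``small on average'' to ``small pointwise'': the $\omega_0$-average supplies both the differentiation in $E$ and the uniform density bound, while the smallness must now come from the $\omega^{\perp}$-rare event that $A_{\omega^{\perp}}$ has spectrum reaching down to energy $E$. Since $g_{\omega^{\perp}}\le\norm{\rho}_{\infty}$, we get $n(E)\le\norm{\rho}_{\infty}\,\P_{\omega^{\perp}}\set{g_{\omega^{\perp}}(E)\neq0}$.

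It then remains to bound this probability by the Lifshitz mechanism. I would work in a box $\Lambda_L$ centred at the origin of side $L\sim E^{-1/2}$: a Combes--Thomas estimate reduces the infinite-volume event to its finite-volume analogue up to an exponentially small error, and a Temple/Neumann-bracketing lower bound on the ground-state energy of $H$ restricted to $\Lambda_L$ shows that $\dist(0,\sigma)\lesssim E$ forces all $\omega_j$ with $j\in\Lambda_L$ to be of order $E$. As the $\omega_j$ are i.i.d.\ with $\inf\supp\mu=0$, this event has probability at most $\big(\mu([0,cE])\big)^{c'L^{d}}\le\e^{-cE^{-d/2+o(1)}}$, which is precisely the input behind \eqref{LT}; feeding it into the bound above gives the pointwise estimate for a.e.\ $E$.

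The main obstacle is the second step: making the $\omega_0$-averaging bound \emph{local in energy} while retaining a genuine probability factor rather than the crude constant $\norm{\rho}_{\infty}$. One must show that $g_{\omega^{\perp}}(E)$ is not merely bounded but switched off outside the Lifshitz-rare configurations of $\omega^{\perp}$, and that the threshold defining ``spectrum visible near $E$'' can be coupled to the box size $L\sim E^{-1/2}$ without degrading the exponent below $\tfrac d2$. Controlling the $\delta_0$-overlap of the relevant low-energy states together with the finite-volume error uniformly in $E$, so that the exceptional set remains Lebesgue-null, is the delicate point; once this localization is established, the surviving probabilistic estimate is the same one that produces the Lifshitz tail for the IDS.
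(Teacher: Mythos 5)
Your core mechanism in the second step has a genuine gap, and its naive form is false. You reduce to bounding $\P_{\omega^{\perp}}\set{g_{\omega^{\perp}}(E)\neq 0}$ and assert that the smallness comes from ``the $\omega^{\perp}$-rare event that $A_{\omega^{\perp}}$ has spectrum reaching down to energy $E$.'' That event has probability one: $A_{\omega^{\perp}}=H_{\bom}-\omega_0\Pi_0$ is a rank-one perturbation of $H_{\bom}$, so its essential spectrum almost surely equals $\Sigma$, and $\inf\Sigma=0$; since also $A_{\omega^{\perp}}\ge 0$, we get $\inf\sigma(A_{\omega^{\perp}})=0$ a.s., for every realization. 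What could be rare is not the presence of spectrum below $E$ but the presence of spectral weight at $\delta_0$ below $E$, and even the refined event $\set{g_{\omega^{\perp}}(E)\neq 0}$ is not obviously rare: low-energy states of $A_{\omega^{\perp}}$ localized far from the origin still have nonzero (exponentially small) overlap with $\delta_0$. By Aronszajn--Donoghue rank-one theory, whether $g_{\omega^{\perp}}(E)$ vanishes is governed by the boundary value $F(E+\mathrm{i}0)$ of the Borel transform $F(z)=\scal{\delta_0,(A_{\omega^{\perp}}-z)^{-1}\delta_0}$; the correct event is roughly that $-1/F(E+\mathrm{i}0)$ lies in $\supp\mu$, which forces $\int_{[0,E)}(E-E')^{-1}\,\d\mu^{A_{\omega^{\perp}}}_{\delta_0}(E')$ to be of order one. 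Bounding the probability of \emph{that} is precisely the quantitative problem your proposal defers to ``the delicate point'': a Chebyshev bound on the mass $\mu^{A_{\omega^{\perp}}}_{\delta_0}([0,E])$ does not suffice (arbitrarily small weight just below $E$ still triggers the event, and excluding it would require a Wegner-type estimate at site $0$ for $A_{\omega^{\perp}}$, which has no randomness there), while a first-moment bound on the singular integral involves the expected local density of states of $A_{\omega^{\perp}}$ --- essentially the quantity being estimated, so the argument is circular. A smaller but symptomatic issue: with $L\sim E^{-1/2}$ the Combes--Thomas error is $\e^{-c\sqrt{E}L}=O(1)$, not exponentially small, so your finite-volume reduction needs larger boxes.

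The paper circumvents exactly this obstruction by never estimating the probability of an event: the smallness enters as an \emph{expectation of a trace}. It proves a finite-volume Wegner estimate with exponentially small constant (Theorem~\ref{thmLTW}) via a double averaging: insert a smooth cutoff $\tilde P_0=f_E(H^{(\Lambda)}_{\omega_0^{\perp}})$ using Lemma~\ref{lemobs}, sum over sites $k$ using polynomial kernel decay of $f_E(H^{(\Lambda)}_{\omega_0^{\perp}})$, spectral-average over $\omega_0$ alone (yielding the factor $\norm{\rho}_\infty\abs{I}$ by \eqref{hypSA}), and then average the surviving factor $\|\Pi_0\tilde P_0\Pi_k\|$ over the variables on a sublattice $\Gamma\not\ni 0,k$, after dominating it by a heat kernel of the sublattice operator $H_{\bom_{\Gamma}}$ through positivity preservation (dropping the nonnegative potential off $\Gamma$). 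The expected finite-volume IDS of $H_{\bom_{\Gamma}}$ is then controlled by the known Lifshitz tail \eqref{LT}, which is valid without the covering condition. Note also that the null set $\mathcal N$ in Theorem~\ref{thmLT} has a concrete origin that your scheme does not account for: the finite-volume IDS of the sublattice model converges only for a.e.\ $E$ (there is no Wegner estimate for $H_{\bom_{\Gamma}}$, hence no continuity of its IDS), so the scale $L(E,\eps)$ exists only off a Lebesgue-null set of energies.
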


\begin{remark}
The same Lifshitz tails estimate holds for models in the continuum; see \cite{CGK2}.
\end{remark}

The proof of Theorem~\ref{thmLT} takes advantage of a new double averaging procedure introduced in \cite[Theorem~2.2]{CGK} to extract better control on the constant in the  Wegner estimate. Theorem~\ref{thmLT} will be an immediate consequence of Theorem~\ref{thmLTW}.


\section{Finite volume operators, the integrated density of states, and the Wegner estimate}

Finite volume operators will be defined for 
finite boxes \beq
\Lambda=\Lambda_L(j):= j +\left[-\tfrac L 2, \tfrac L 2\right[^d,
\eeq
where $j\in \Z^d$ and $L \in 2\N$,  $ L > 1$.   Given such $\Lambda$, we will consider the random Schr\"odinger operator $H_\bom^{(\Lambda)}$ on $\ell^2(\Lambda)$ given by the restriction of the Anderson Hamiltonian  $H_\bom$ to $\Lambda$ with periodic boundary condition. To do so, we identify $\Lambda$ with the   torus $ \Z^{d}\slash L\Z^{d}$
in the usual way, and  define finite volume operators  
\begin{align}\label{finvolH}
H_{\bom}\up{\Lambda} := - \Delta\up{\Lambda}+ V_{\bom}\up{\Lambda} \quad \text{on}   \quad \ell^{2}(\Lambda),
\end{align}
where 
$\Delta\up{\Lambda}$ is the  Laplacian on $\Lambda$ with periodic boundary condition,
and the random potential $V_{\bom}\up{\Lambda}$ is the restriction of $V_{\bom\up{\Lambda}}$ to $\Lambda$, where, given $\bom=\set{\omega_i}_{i \in \Z^d}$ , we define $\bom\up{\Lambda}=\set{\omega\up{\Lambda}_i}_{i \in \Z^d}$ by
\beq\begin{split}
\omega\up{\Lambda}_i &=\omega_i \quad \text{if}  \quad i \in \Lambda, \\
\omega\up{\Lambda}_i &=\omega\up{\Lambda}_k  \quad \text{if}  \quad k-i \in L\Z^d . \label{periodomeg}
\end{split}\eeq

The  finite volume random operator $ H_{\bom}\up{\Lambda}$ is covariant with respect to translations in the torus. 
If $B \subset \R$ is a Borel set, we write $P_\bom^{(\Lambda)}(B):=\chi_B\pa{H_\bom^{(\Lambda)}}$ and $P_\bom(B):=\chi_B(H_\bom)$ for the spectral projections.

The finite volume operator $H_\bom^{(\Lambda)}$ is a finite dimensional operator, and hence its ($\bom$-dependent) spectrum consists of a finite number of  isolated  eigenvalues with finite multiplicity.  These finite volume  operators satisfy  a Wegner estimate \cite{W,FS} (see also \cite{CGK0}), which provides informations on the number of eigenvalues in a given energy  interval: given $E_0 > 0$, there exists a  constant $K_W(E_0)$, independent of $\Lambda$, such that for  all intervals $I\subset [0,E_0]$ we have
\beq\label{hypWegner}
\E \set {\tr P_\bom^{(\Lambda)}(I)} \le K_W(E_0)  \|\rho\|_\infty \abs{I}\abs{\Lambda}.
\eeq
The Wegner estimate is an an immediate consequence of the following spectral averaging property (e.g., \cite{CHK2,CGK0}) :  for any $k\in\Lambda$,
\beq\label{hypSA}
\E_{\omega_k} \set {\langle \delta_k, P_\bom^{(\Lambda)}(I) \delta_k\rangle}  = \E_{\omega_k} \set {\tr \Pi_k P_\bom^{(\Lambda)}(I) \Pi_k} \le  \|\rho\|_\infty \abs{I},
\eeq
where $(\delta_k)_{k\in\Z^d}$ stands for the canonical basis (so $\Pi_k=  |\delta_k\rangle \langle \delta_k|$).
It follows that for the Anderson model  \eq{hypWegner} holds with
\beq\label{boundK}
 K_W(E_0) \le 1 \quad \text{for all} \quad  E_0>0.
\eeq
We shall prove that at the bottom of the spectrum, in the Lifshitz tails region, the constant  $K_W(E_0)$ falls off in the same way as the IDS.

We set
\beq\label{NELambda}
N_{\bom}^{(\Lambda)}(E):=  \abs{\Lambda}^{-1} \tr  \, \chi_{]- \infty,E]}\pa{H_{\bom}\up{\Lambda}},
\eeq
and  recall  that  (e.g., \cite{CL,Ki,PF}) for $\P$-a.e. $\bom$ we have, using the fact that $N(E)$ is a continuous function by \eq{hypWegner}, 
 \beq \label{N(E)}
N(E)= \lim_{L \to \infty} N_{\bom}^{(\Lambda_{L}(0))}(E)\quad \text{for all}\quad E \in \R,
\eeq
where $N(E)$ is
 the integrated density of states (IDS) given in \eq{defN(E)}).  
 Setting
\beq\label{NELaExp}
N^{(\Lambda)}(E):=\E\set{N_{\bom}^{(\Lambda)}(E) }=   \E\set{\tr\pa{\Pi_j\chi_{]-\infty,E]}(H_{\bom}\up{\Lambda}) \Pi_j}} \quad \text{for }\quad j \in \Lambda,
\eeq
the last equality holding in view of the periodic boundary condition,
 it follows that
 \beq \label{N222}
N(E)= \lim_{L \to \infty} N^{(\Lambda_{L}(0))}(E)= \E\set{\tr\pa{\Pi_0\chi_{]-\infty,E]}(H_{\bom}\up{\Lambda_{L}(0)}) \Pi_0}}
\eeq
for all $E\in \R$.

Combining  \eqref{hypWegner} and \eqref{N222}, we conclude that, if $N(E)$ is differentiable at $E$, which is true for a.e.\ $E$, we have
\beq\label{nEestW}
n(E):=N^\pr(E) \le   K_W(E)  \|\rho\|_\infty .
\eeq
Moreover, to obtain \eq{nEestW}, it suffices to have  the Wegner estimate  \eqref{hypWegner} for boxes $\Lambda_{L_{n}}(0)$ with  $L_{n} \to \infty$.    Thus Theorem~\ref{thmLT} follows immediately from the following result.

\begin{theorem}\label{thmLTW}
Let $H_\omega$ be an Anderson Hamiltonian.  Then there is an energy $E_0>0$, and for a.e. $E \in ]0,E_{0}]$  and all  $\eps\in ]0,\frac d 2[$ there exists a scale $L(E,\eps)<\infty$, such that given  $L \in 4\N$ with  $L \ge L(E,\eps)$,  the Wegner estimate  \eqref{hypWegner}  holds in all boxes  $\Lambda=\Lambda_L$  for all intervals $I \subset [0,E]$ with a constant $K_W(E)$ satisfying 
\beq
K_W(E) \le C_{d,\eps} \,\e^{-E^{-\frac d2+\eps}},
\eeq
for some constant $C_{d,\eps}<\infty$.  As a consequence, we have
\beq\label{nelowerbd}
n(E) \le C_{d,\eps}\norm{\rho}_{\infty} \e^{-E^{-\frac d2+\eps}} \quad \text{for a.e.}\quad E \in ]0,E_{0}].
\eeq
\end{theorem}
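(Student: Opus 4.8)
The plan is to prove the quantitative Wegner estimate of Theorem~\ref{thmLTW} by upgrading the single-site spectral averaging \eq{hypSA} into a two-variable (\emph{double}) averaging bound, in which the crude factor $\norm{\rho}_\infty\abs{I}$ is replaced by a factor that, after summation over the lattice, reproduces the finite-volume integrated density of states $N\up{\Lambda}(E)$. Since the IDS already obeys the Lifshitz bound \eq{LT}, this transfers the Lifshitz decay directly to the Wegner constant $K_W(E)$, after which \eq{nEestW} yields the density-of-states bound \eq{nelowerbd}.

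First I would recall the mechanism behind \eq{hypSA}: for a fixed configuration of the other variables, $H_\bom\up{\Lambda}$ is a rank-one perturbation in the variable $\omega_k$, so $\omega_k\mapsto\scal{\delta_k,P_\bom\up{\Lambda}(I)\delta_k}$ is governed by the spectral shift of a monotone family and satisfies $\int\scal{\delta_k,P\up{\Lambda}(]E_1,E_2])\delta_k}\,\d\omega_k\le E_2-E_1$. The wasteful step is bounding this by $\abs{I}$ uniformly in the location of $I$: near the bottom of the spectrum an eigenvalue in $I\subset[0,E]$ is a rare event, and this smallness is invisible to a single average. The double averaging of \cite[Theorem~2.2]{CGK} repairs this by integrating the single-site identity against a \emph{second} variable --- implemented here by pairing each site $k$ with its translate by $\tfrac L2$ on the torus, which forces the parity condition $L\in 4\N$ --- and exploiting the monotonicity of the eigenvalues together with the i.i.d.\ structure of $\bom$. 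The outcome I expect is an estimate of the form
\beq
\E\set{\scal{\delta_k,P_\bom\up{\Lambda}(I)\delta_k}}\le \norm{\rho}_\infty\,\abs{I}\,\Theta\up{\Lambda}(E),\qquad I\subset[0,E],
\eeq
with a local occupation factor $\Theta\up{\Lambda}(E)$ that is independent of $k$ and of $\abs{\Lambda}$.

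Summing over $k\in\Lambda$ and using the covariance of $H_\bom\up{\Lambda}$ under torus translations, exactly as in \eq{N222}, the factor $\Theta\up{\Lambda}(E)$ is recognized --- up to a dimensional constant and a harmless dilation of the energy window --- as a difference quotient of the IDS. This gives \eq{hypWegner} on $\Lambda=\Lambda_L$, uniformly over $I\subset[0,E]$, in the self-improving form
\beq
K_W(E)\,\norm{\rho}_\infty\le C_d\,\frac{N\up{\Lambda_L}(E')}{E'},
\eeq
with $E'$ comparable to $E$. It then remains to insert the Lifshitz asymptotics. Because $N\up{\Lambda_L}(E')\to N(E')$ as $L\to\infty$ by \eq{N222}, for each $E$ I would fix a scale $L(E,\eps)$ so large that $N\up{\Lambda_L}(E')$ stays within a constant factor of $N(E')$ for all $L\ge L(E,\eps)$. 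The bound \eq{LT} then gives $N(E')\le \e^{-(E')^{-d/2+\eps_0}}$ for any fixed $\eps_0<\eps$ once $E$ is small; since $E'$ is comparable to $E$ and $-\tfrac d2+\eps_0<0$, the polynomial factor $1/E'$ and the replacement of $\eps_0$ by $\eps$ are absorbed into constants to give $K_W(E)\le C_{d,\eps}\,\e^{-E^{-d/2+\eps}}$. Combined with \eq{nEestW}, valid for a.e.\ $E$ (where the monotone function $N$ is differentiable off a Lebesgue-null set), this produces \eq{nelowerbd}.

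The main obstacle is the second step: making the double-averaging constant genuinely independent of $\abs{\Lambda}$ while keeping the improvement \emph{uniform} over all subintervals $I\subset[0,E]$ rather than only for $I=[0,E]$, and verifying that the occupation factor $\Theta\up{\Lambda}(E)$ is truly the finite-volume IDS and not merely bounded by the full-volume trace. A secondary technical point is the quantitative control of the convergence $N\up{\Lambda_L}(E)\to N(E)$ needed to pin down the scale $L(E,\eps)$, together with the bookkeeping of the $\eps$-loss incurred when passing from $(E')^{-d/2+\eps_0}$ back to $E^{-d/2+\eps}$.
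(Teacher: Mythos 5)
Your high-level strategy --- upgrade the spectral averaging \eqref{hypSA} by a double averaging so that the Wegner constant inherits the Lifshitz decay of the IDS, then conclude via \eqref{nEestW} --- is indeed the strategy of the paper. But your proposal stops exactly where the proof begins: the two-variable averaging bound with an occupation factor $\Theta\up{\Lambda}(E)$ \emph{is} the entire content of the theorem, and you defer it as ``the main obstacle.'' Moreover, the mechanism you guess for it is not the right one, and it is not clear it could be made to work. In the paper the double average is \emph{not} obtained by pairing a site $k$ with its translate by $\tfrac L2$; it is obtained as follows. One first inserts, via the trace inequality of Lemma~\ref{lemobs}, a smoothed spectral projection $\tilde P_0=f_E\bigl(H\up{\Lambda}_{\omega_0^\perp}\bigr)$ of the Hamiltonian with the coupling at site $0$ removed, and expands $\tr P\up{\Lambda}_{\bom}(I)\Pi_0\tilde P_0$ over sites $k$ as in \eqref{sumtr}. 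The crucial step \eqref{usepositivity} then replaces, by kernel decay plus a positivity-preserving semigroup comparison, the weight $\|\Pi_0\tilde P_0\Pi_k\|$ by a quantity depending \emph{only} on the random variables on a sparse sublattice $\Gamma_k=k_0+(4\Z)^d$ chosen to avoid both $0$ and $k$. It is precisely this engineered independence that lets the expectation factorize in \eqref{boundk}: the average over $\omega_r$ ($r=0$ or $k$) produces $\norm{\rho}_\infty\abs{I}$ by \eqref{hypSA}, while the average over $\bom_{\Gamma_k}$ of the heat-kernel factor carries the Lifshitz decay. The condition $L\in4\N$ has nothing to do with parity of a pairing; it is needed so that $(4\Z)^d$ tiles the torus, $\abs{\Gamma\cap\Lambda}=4^{-d}\abs{\Lambda}$, which makes the translation/ergodicity argument \eqref{transl} work.

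A second, related gap: your occupation factor is not the finite-volume IDS of the original model, and your appeal to \eqref{N222} (which converges for \emph{all} $E$, thanks to Wegner) misidentifies the source of the ``a.e.\ $E$'' and of the scale $L(E,\eps)$ in the statement. What actually appears is (a heat-kernel regularization of) the IDS of the \emph{sublattice} model $H_{\bom_{\Gamma}}$, for which one must invoke Lifshitz tails without the covering condition \eqref{lifest}, and for which no Wegner estimate is available --- so the finite-volume convergence \eqref{NGammaconv} holds only for a.e.\ $E$, which is exactly why the theorem is stated for a.e.\ $E$ with a scale $L(E,\eps)$, not merely because $N$ is differentiable a.e. Finally, note that your claimed intermediate bound $K_W(E)\norm{\rho}_\infty\le C_d\,N\up{\Lambda_L}(E')/E'$ is \emph{stronger} than what the paper's argument yields: the factorization costs fractional powers (one ends with roughly $E^{-d-\frac32}\,N_{\Gamma}(4E)^{1/8}$, cf.\ the last display of the proof), which still suffices for \eqref{nelowerbd} but is far from $N(E')/E'$; asserting the sharper form without a mechanism is an additional unsupported claim.
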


\section{The proof of Theorem~\ref{thmLTW}}

We borrow from \cite{CGK2} the following observation. 
\begin{lemma}[\cite{CGK2}]\label{lemobs}
Let $H=H_0+W$, where $H ,H_0$ are semi-bounded  self-adjoint operators, say $H,H_{0}\ge -\Theta$ for some $\Theta>0$,  such that    $\pa{H+ \Theta+1}^{-p}$
is a trace class operator for some $p >0$, and $W$ is a  bounded  self-adjoint operator. Given $E_0 \in \R$, 
let $f,g$ be  bounded Borel measurable nonnegative functions such that
$f= \chi_{(-\infty, E_0]} f$ and $\chi_{(-\infty, E_0]}\le  g\le 1$.  Then   $ f(H)W$ and $ f(H) W g(H_0)$ are  trace class operators, and 
 \begin{equation}
\tr f(H)W \le \tr f(H) W g(H_0).
\end{equation}
\end{lemma}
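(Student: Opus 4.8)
The plan is to first dispose of the trace-class assertions and then reduce the inequality to showing $\tr f(H)\,W\,(I-g(H_0))\le 0$. Since $H\ge-\Theta$ and $f=\chi_{(-\infty,E_0]}f$, the function $\lambda\mapsto f(\lambda)(\lambda+\Theta+1)^p$ is bounded on $[-\Theta,E_0]$, so $f(H)=[f(H)(H+\Theta+1)^p](H+\Theta+1)^{-p}$ is the product of a bounded operator and a trace-class operator, hence trace class; the same argument applies to $\tilde f(H):=f(H)(H-E_0)$, whose symbol $f(\lambda)(\lambda-E_0)$ is bounded and supported in $(-\infty,E_0]$. Because $W$ and $g(H_0)$ are bounded (the latter by $0\le g\le1$), both $f(H)W$ and $f(H)Wg(H_0)$ are trace class, and the claimed inequality is equivalent to $\tr f(H)\,W\,(I-g(H_0))\le 0$.

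Next I would extract the sign information hidden in the hypotheses. From $\chi_{(-\infty,E_0]}\le g\le1$ one gets $g\equiv1$ on $(-\infty,E_0]$, so $0\le I-g(H_0)$ and $(\lambda-E_0)(1-g(\lambda))\ge0$ for every $\lambda$; hence $B:=(H_0-E_0)(I-g(H_0))\ge0$ (though possibly unbounded). Writing $W=(H-E_0)-(H_0-E_0)$, I would split
\[
\tr f(H)\,W\,(I-g(H_0)) \;=\; \tr \tilde f(H)(I-g(H_0)) \;-\; \tr f(H)\,B,
\]
where each term is separately trace class, the second because it equals the difference of the trace-class operators $\tilde f(H)(I-g(H_0))$ and $f(H)W(I-g(H_0))$. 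For the first term, $\tilde f(H)\le0$ as an operator since $f(\lambda)(\lambda-E_0)\le0$ on the support of $f$; conjugating by $(I-g(H_0))^{1/2}$ and using cyclicity of the trace gives $\tr\tilde f(H)(I-g(H_0))=\tr (I-g(H_0))^{1/2}\tilde f(H)(I-g(H_0))^{1/2}\le 0$.

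It then remains to prove $\tr f(H)\,B\ge0$, which is the only delicate point because $B$ is unbounded. I would handle this by spectral truncation of $H_0$: set $P_n=\chi_{[-\Theta,n]}(H_0)$ and $B_n=BP_n$, which is bounded, nonnegative, and satisfies $f(H)B_n=f(H)B-f(H)B(I-P_n)$. Since $f(H)B$ is trace class and $I-P_n\to0$ strongly with $\norm{I-P_n}\le1$, the operator $f(H)B(I-P_n)$ tends to $0$ in trace norm (using that $I-P_n$ is self-adjoint, so $\norm{f(H)B(I-P_n)}_1=\norm{(I-P_n)(f(H)B)^*}_1\to0$ by continuity of left multiplication along uniformly bounded strongly convergent sequences). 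Hence $\tr f(H)B_n\to\tr f(H)B$. For each $n$, writing $f(H)=f(H)^{1/2}f(H)^{1/2}$ and using cyclicity with the bounded operator $B_n$ yields $\tr f(H)B_n=\tr f(H)^{1/2}B_n f(H)^{1/2}\ge0$, whence $\tr f(H)B\ge0$. Combining the two estimates gives $\tr f(H)\,W\,(I-g(H_0))\le0$, the desired inequality. The main obstacle is precisely this last step: the operator $B$ arising from the high-energy part of $H_0$ is generally unbounded, and the role of the trace-class hypothesis on $(H+\Theta+1)^{-p}$, together with the strong-convergence trace-norm estimate, is exactly to legitimize the cyclicity and positivity manipulations.
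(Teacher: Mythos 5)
Your proof is correct and takes essentially the same approach as the paper, whose (one-line) argument is precisely to show $\tr f(H)W(1-g(H_0))\le 0$ using $W=H-H_0$. Your decomposition $W=(H-E_0)-(H_0-E_0)$, with the sign analysis of $\tilde f(H)=f(H)(H-E_0)\le 0$ and $B=(H_0-E_0)(1-g(H_0))\ge 0$, together with the spectral truncation handling the unboundedness of $B$, is just the natural fleshing-out of the details the paper leaves implicit.
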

The proof is elementary. It consists in proving that $ \tr f(H) W (1-g(H_0))\le 0$, using $W=H-H_0$.

\begin{proof}[Proof of Theorem~\ref{thmLTW}]  All the operators  in this proofs will be finite volume operators on a box $\Lambda=\Lambda_{L}(0)$ as defined in \eqref{finvolH}-\eqref{periodomeg}.  We will require $L \in 4 \N$.

We set  $\omega_0^\perp=\bom \setminus\{\omega_{0}\}$ and     $H_{\omega_0^\perp}=H_{\bom}- \omega_{0 }\Pi_{0}$. Given $E>0$, we fix a $C^{\infty}$ real-valued non-increasing function $f_{E}$ on $\R$, such that  $f_{E}(t)=1$ for $t\le E$, $f_{E}(t)=0$ for $t\ge 2E$,   and $\abs{f^{(j)}(t)}\le C E^{-j}$ for all $t\in \R$ and $j=1,2,\ldots,2d+3$, where $C$ is a constant independent of $E$.  We let
$\tilde{P}_0=\tilde{P}\up{\Lambda}_{\omega_0^\perp,E}= f_{E}(H\up{\Lambda}_{\omega_0^\perp})$.

Given an interval $I \subset [0,E]$, it follows from Lemma \ref{lemobs} that
\begin{align}\notag
\tr {P\up{\Lambda}_{\bom}}(I) \Pi_0
& \le  \tr {P\up{\Lambda}_{\bom}}(I) \Pi_0 \tilde{P}_0
 = \sum_{k\in \Lambda} \tr {P\up{\Lambda}_{\bom}}(I) \Pi_0 \tilde{P}_0 \Pi_k\\ \label{sumtr}
& \le \sum_{k\in \Lambda} \| {P\up{\Lambda}_{\bom}}(I) \Pi_0\|_2 \| {P\up{\Lambda}_{\bom}}(I) \Pi_k\|_2 \|\Pi_0 \tilde{P}_0 \Pi_k \| \\
& \le \frac12 \sum_{k\in \Lambda} \set{\tr  (\Pi_0 {P\up{\Lambda}_{\bom}}(I) \Pi_0) + \tr (\Pi_k {P\up{\Lambda}_{\bom}}(I) \Pi_k) } \|\Pi_0 \tilde{P}_0 \Pi_k \|. \notag
\end{align}
Next, for any given $k$ (including $k=0$) there exists $k_0\in \Z^d$, so that, defining the sublattice $\Gamma_{k}=k_0+(4\Z)^d\subset\Z^d$, we have $0,k\not\in \Gamma$. 
Since we required $L \in 4\N$,  we have $ \abs{\Gamma\cap\Lambda} =  4^{-d}|\Lambda|$. We define $\bom_{\Gamma}=\set{\omega_{\gamma}}_{\gamma\in \Gamma}$. We have, using the fast decay of the kernel of smooth functions of Schr\"odinger operators (e.g. \cite{GKdecay}),  
 using the notation  $\scal{k}=\sqrt{1 + \abs{k}^{2}}$, and letting $r$ stand for either $0$ or $k$,
\begin{align}
\|\Pi_0 \tilde{P}_0 \Pi_k \| \notag
& \le \|\Pi_0 \tilde{P}_0 \Pi_k \|^{\frac12} \|\Pi_r \tilde{P}_0  \|^{\frac 12}=  \|\Pi_0 \tilde{P}_0 \Pi_k \|^{\frac12} \|\Pi_r \tilde{P}_0  \Pi_r\|^{\frac 14}\\ \label{usepositivity}
& \le  \frac{C_d}{ E^{d+\frac 3 2}\scal{k}^{d+1}}\pa{  \tr \Pi_r \tilde{P}_0 \Pi_r }^{\frac 14} \\ \notag
 &\le  \frac{C_d}{ E^{d+\frac 3 2}\scal{k}^{d+1}} \pa{  \e^{2tE}\tr  \Pi_r  \e^{- t H\up{\Lambda}_{\bom_{0}^{\perp}}} \Pi_r   }^{\frac 14} \\ \notag
& \le  \frac{C_d}{ E^{d+\frac 3 2}\scal{k}^{d+1}} \pa{  \e^{2tE} \tr  \Pi_r  \e^{- t H\up{\Lambda}_{\bom_{\Gamma_k}}} \Pi_r   }^{\frac 14} , 
\end{align}
for all $t>0$, where we used  $r\not\in \Gamma_k$ and  a positivity preserving argument like \cite[Lemma~2.2]{BGKS} to get the last inequality.
Hence, again letting $r$ stand for either $0$ or $k$, so  $r\not\in \Gamma_k$, using the spectral averaging \eqref{hypSA} with the bound \eqref{boundK} and  \eqref{usepositivity} leads to
 \begin{align}\notag
&\E\pa{ \tr \set{  \Pi_r{P\up{\Lambda}_{\bom}}(I) \Pi_r}  \|\Pi_0 \tilde{P}_0 \Pi_k \| } \\ \notag
& \qquad \le \frac{C_d }{ E^{d+\frac 3 2}\scal{k}^{d+1}}  \E_{\omega_{r}^\perp}   \set{ \pa{\e^{2tE} \tr  \Pi_r  \e^{- t H\up{\Lambda}_{\bom_{\Gamma_k}}} \Pi_r   }^{\frac14} \E_{\omega_{r}}\set{ \tr  \Pi_r {P\up{\Lambda}_{\bom}}(I) \Pi_r } }
\\ \notag 
& \qquad \le \frac{C_d}{ E^{d+\frac 3 2}\scal{k}^{d+1}}  \|\rho\|_\infty |I|  \e^{\frac t 2 E}\pa{\E_{\bom_{\Gamma_k}}\set{ \tr  \Pi_r  \e^{- t H\up{\Lambda}_{\bom_{\Gamma_k}}} \Pi_r}^\frac14  }\\
& \qquad \le \frac{C_d}{ E^{d+\frac 3 2}\scal{k}^{d+1}}  \|\rho\|_\infty |I|  \e^{\frac t 2 E}\pa{\E_{\bom_{\Gamma_k}}\set{ \tr  \Pi_r  \e^{- t H\up{\Lambda}_{\bom_{\Gamma_k}}} \Pi_r} }^\frac14 . \label{boundk}
\end{align}

Thanks to \eqref{sumtr} and \eqref{boundk}, it now suffices to bound  the quantity
\begin{align}
\E_{\bom_{\Gamma_k}}\set{ \tr  \Pi_r  \e^{- t H\up{\Lambda}_{\bom_{\Gamma_k}}} \Pi_r } \quad \text{with}\quad r=0,k.
\end{align}
To alleviate notations, we write from now on $\Gamma=\Gamma_k$.  For all  $t >0 $ we have 
\begin{align}
\notag
 \tr \Pi_{r} \e^{- tH_{\bom_{\Gamma}}}\Pi_{r}
 &
 \le   \tr \Pi_{r} \chi_{]-\infty,4E]}(H\up{\Lambda}_{\bom_{\Gamma}})\Pi_{r} + \e^{-4 tE}\tr \Pi_{r} \notag \\
 & =  \tr \Pi_{r} \chi_{]-\infty,4E]}(H\up{\Lambda}_{\bom_{\Gamma}})\Pi_{r} +  \e^{-4tE}. \label{boundsublat}
 \end{align}
By the $\Gamma$-ergodicity   of $H\up{\Lambda}_{\bom_{\Gamma}}$, and taking advantage of the periodic boundary condition, we have
 \begin{align}
 \E\set{ \tr \Pi_r \chi_{]-\infty,4E]}(H\up{\Lambda}_{\bom_{\Gamma}})\Pi_r}
 & = \frac 1 {\abs{\Gamma\cap\Lambda}} \sum_{\gamma\in(r+\Gamma)\cap\Lambda} \E_{\bom_\Gamma}\set{ \tr \Pi_{\gamma} \chi_{]-\infty,4E]}(H\up{\Lambda}_{\bom_{\Gamma}})\Pi_{\gamma}}
 \notag \\
 &\le \frac {4^{d}} {\abs{\Lambda}} \E_{\bom_\Gamma}\set{ \tr  \chi_{]-\infty,4E]}(H\up{\Lambda}_{\bom_{\Gamma}})}  .\label{transl}
  \end{align}

We now use the Lifshitz tails estimate for $H_{\bom_\Gamma}$ to bound $\E_{\bom_\Gamma}\set{ \tr  \chi_{]-\infty,4E]}(H\up{\Lambda}_{\bom_{\Gamma}})}$. Note that $\Gamma$ is a strict sublattice of $\Z^d$, so  we lack the so called covering condition. The Lifshitz tails estimate  \eq{LT} is nevertheless valid for $H_{\bom_\Gamma}$ (e.g. \cite{Ki}), and it implies that for all $\eps\in ]0, \frac d 2[$ there is an energy $E_\eps >0$  such that (on $\Z^{d}$)
\beq \label{lifest}
N_{\Gamma}(E)= \E_{\bom_\Gamma}\set{ \tr \chi_{]-\infty,E]}(H_{\bom_{\Gamma}})}  \le  \e^{- E^{- \frac d2 +\eps} }\quad \text{for all} \quad E \le E_\eps.
\eeq 

Given a box $\Lambda_{L}=\Lambda_{L}(0)$, we set, similarly to \eq{NELambda} and \eq{NELaExp}, 
\beq
N_{\bom_{\Gamma}}\up{\Lambda_{L}}(E)= \abs{\Lambda_{L}}^{-1}\ \tr  \chi_{]-\infty,E]}(H_{\bom_{\Gamma}}\up{{\Lambda_{L}}}), \quad N_{{\Gamma}}\up{\Lambda_{L}}(E)= \E_{\bom_{\Gamma}}\set{N_{\bom_{\Gamma}}\up{\Lambda_{L}}(E)}.
\eeq
Since $H_{\bom_{\Gamma}}$ is  $\Gamma$-ergodic, for $\P$-a.e. $\bom_{\Gamma}$
we have \cite{CL,PF}
\begin{align} \label{N(E)Gamma}
N_{\Gamma}(E)= \lim_{L \to \infty} N_{\bom_{\Gamma}}^{(\Lambda_{L})}(E)\quad \text{for a.e.}\quad E \in \R,
\end{align}
so we conclude
\beq\label{NGammaconv}
N_{\Gamma}(E)= \lim_{L \to \infty} N_{{\Gamma}}^{(\Lambda_{L})}(E)\quad \text{for a.e.}\quad E \in \R.
\eeq
(Compare with   \eq{N(E)} and \eq{N222}, where convergence holds for all $E$.  The difference is the lack of a Wegner estimate for  $H_{\bom_{\Gamma}}$.)
It follows that for a.e.\ $E\in ]0, E_{\eps}]$
 there exists $L(E,\eps)< \infty$ such that for all $L\ge L(E,\eps)$ we have 
\beq \label{estNvol}
N_{\Gamma}\up{\Lambda_{L}}(E)\le 2 N_{\Gamma}(E)\le 2   \e^{- E^{- \frac d2 +\eps} }.
\eeq

Thus, for a.e.\ $E\in ]0, \frac 1 4 E_{\eps}]$ and  $L\ge L(E,\eps)$, 
combining \eqref{boundsublat}, \eq{transl}, and  \eq{estNvol}, we get
\begin{align}
\E\set{\tr \Pi_{r}\tilde{P}_{0}\up{\Lambda_{L}} \Pi_{r} }
\le 2\cdot 4^{d} \e^{- (4E)^{- \frac d2 +\eps} }+ \e^{- 4tE} .
\end{align}
Choosing  $t=t_{E}$ by
\beq
 \e^{- 4tE} = 2\cdot 4^{d} \e^{- (4E)^{- \frac d2 -1 +\eps} }, \quad\text{i.e.,}\quad
 t= (4E)^{- \frac d2-1  +\eps}  -  (1 +2d)( \log 2)(4E)^{-1},
\eeq
and letting $E_{\eps}^{\pr}= \max \set{E \le \tfrac 1 4E_\eps; \ t_{E}> 0 }$, we conclude that
 for a.e.\ $E\in ]0,  E_{\eps}^{\pr}]$ and  $L\ge L(E,\eps)$ we have
\begin{align}\label{trPiLT}
\E\set{\tr \Pi_{r}\tilde{P}_{0}\up{\Lambda_{L}}  \Pi_{r} }
\le  4^{d+1} \e^{- (4E)^{- \frac d2  +\eps} }=  2\e^{- 4t_{E}E} .
\end{align}
Combining \eq{sumtr}, \eq{boundk}, and \eq{trPiLT}, we conclude that, for a.e.\ $E\in ]0,  E_{\eps}^{\pr}]$ and  $L\ge L(E,\eps)$ we have, for all intervals $I \subset [0,E]$,
\begin{align}
&\E \set{\tr {P_{\bom}}\up{\Lambda_{L}}(I) \Pi_0} \le      \sum_{k\in \Lambda}  \frac{C_d }{ E^{d+\frac 3 2}\scal{k}^{d+1}}  \|\rho\|_\infty |I|  \e^{\frac 1 2  t_{E}  E}\pa{ 2\e^{- 4t_{E}E}}^{\frac 1 4}\\
& \qquad \le C^{\pr}_{d} E^{-d-\frac 32}    \e^{-\frac 1 2 t_{E}  E}      \|\rho\|_\infty |I| 
= C^{\pr\pr}_{d} E^{-d-\frac 32}    \e^{- \frac 1 8(4E)^{- \frac d2  +\eps} }      \|\rho\|_\infty |I| .
\notag
\end{align}

Theorem~\ref{thmLTW}    follows. \end{proof}


\end{document}